\tikzstyle{state}=[circle,draw,thick,inner sep=1.5mm]
\title{Distributed Analysis for Diagnosability \\ in Concurrent Systems \thanks{This work has been supported by the European Union Sh Framework Programme under grant agreement no. 295261 (MEALS).}}
\author
{
Hern\'an Ponce de Le\'on$^1$ \and Gonzalo Bonigo$^2$ \and Laura Brand\'an Briones$^{2,3}$ \\
$^1$INRIA and LSV, \'Ecole Normale Sup\'erieure de Cachan and CNRS, France \\
$^2$Fa.M.A.F. - Universidad Nacional de C\'ordoba, Argentina \\
$^3$CONICET \\
e-mail: ponce@lsv.ens-cachan.fr, bonigo@famaf.unc.edu.ar, lbrandan@famaf.unc.edu.ar
}
\newtheorem{assumption}{\normalfont \textbf{Assumption}}
\newcommand{\diag}[1]{\textbf{diag}(#1)}
\newcommand{\imp}{\Rightarrow}
\newcommand{\traces}[1]{\textrm{traces}(#1)}
\newcommand{\trace}[1]{\textrm{trace}(#1)}
\newcommand{\camino}[1]{\textrm{path}(#1)}
\newcommand{\paths}[1]{\textrm{paths}(#1)}
\newcommand{\proj}[2]{P_{#1}({#2})}
\newcommand{\bproof}{$$\begin{array}{ll}}
\newcommand{\eproof}{\end{array}$$}
\newtheorem{definition}{Definition}
\newtheorem{example}{Example}
\newtheorem{proposition}{Proposition}
\newtheorem{theorem}{Theorem}
\begin{document}

\maketitle

\begin{abstract}
Complex systems often exhibit unexpected faults that are difficult to handle. Such systems are desirable to be diagnosable, i.e. faults can be automatically detected as they occur (or shortly afterwards), enabling the system to handle the fault or recover. A system is diagnosable if it is possible to detect every fault, in a finite time after they occurred, by only observing the available information from the system. Complex systems are usually built from simpler components running concurrently. We study how to infer the diagnosability property of a complex system (distributed and with multiple faults) from a parallelized analysis of the diagnosability of each of its components synchronizing  with fault free versions of the others. %
In this paper we make the following contributions: (1) we address the diagnosability problem of concurrent systems with arbitrary faults occurring freely in each component. (2) We distribute the diagnosability analysis and illustrate our approach with examples. Moreover, (3) we present a prototype tool that implements our techniques showing promising results.
\end{abstract}

\section{Introduction}

As systems become larger, their behavior becomes more complex. Several things may go wrong, resulting in faults occurring. It is then crucially important to design our systems in a way that we can detect or recover from such faults when they occur. A system is diagnosable when its design allows the detection of faults, for instance a system that has sensors specially dedicated to detect them. Sometimes the detection of faults is more involved and the diagnosability property is harder to establish, specially in systems with several components.

A sound software engineering rule for building complex systems is to divide the whole system in smaller and simpler components, each solving a specific task. Moreover, usually they are built by different groups of people and may be in different places. This means that, in general, complex systems are actually collections of simpler components running in parallel. 

In order to model such systems and formally prove results, there are several formalisms like Finite State Machines (FSMs)~\cite{Sampath1995,twin}, Petri Nets~\cite{GencL03,agnesPetri} and Labeled Transition Systems (LTSs)~\cite{laura,agnes,gonza}. In this paper, we model each component by a LTS, so the whole system is a collection of LTSs synchronizing in all their shared observable actions (see Section~\ref{sec:DDES}). 

In the diagnosability analysis of distributed systems it is usually assumed that a fault can occur in exactly one of the different components. We relax this assumption allowing the same fault to occur in several components. 

Also, the diagnosability analysis is usually iterative (i.e., sequential): the information from local diagnosers is combined until a global verdict is reached. We propose a method to distribute this analysis.

Finally, we developed a tool that implements all our research. The DADDY tool (Distributed Analysis for Distributed Discrete sYstems)~\cite{daddy} is a prototype based on the results presented in~\cite{gonza} and this paper. The tool does not only implement the method we presents but also the classic one allowing us to compare both approaches. We present a comparative analysis of their performance obtained from the experimental running of several examples. 

\paragraph{Related Work} Diagnosability was initially developed in~\cite{Sampath1995} under the setting of discrete event systems. In that paper, necessary and sufficient conditions for testing diagnosability are given. In order to test diagnosability, a special diagnoser is computed, whose complexity of construction is shown to be exponential in the number of states of the original system, and double exponential in the number of faults. Later, in~\cite{twin}, an improvement of this algorithm is presented, where the so-called twin plant method is introduced and shown to have polynomial complexity in the number of states and faults. Afterwards, in~\cite{Schumann07scalablediagnosability}, an improvement to the twin plant method is presented where the system is reduced before building the twin plant.

None of the methods presented there (i.e., \cite{Sampath1995,twin}) consider the problem when the system is composed of components working in parallel. An approach to this consideration is addressed in~\cite{Schumann07scalablediagnosability,Debouk_acoordinated,distributeddiag,SchumannH08} where the diagnosability problem is performed by either local diagnosers or twin plants communicating with each other, directly or through a coordinator, and by that means pooling together the observations. \cite{YeDagueValid} shows that when considering only local observations, diagnosability becomes undecidable when the communication between component is unobservable. An algorithm is proposed to check a sufficient but not necessary condition of diagnosability. However, their results are based in the assumption that a fault can only occur in one of the components, an assumption that can not always be made.

Several mechanisms such as interleaving, shared variables and handshaking have been described in~\cite{0020348} to provide operational models for distributed systems. In the handshaking method, the communication is made by the synchronization on actions or events. These actions must be specified a priori in the model, so the different components can be synchronized at execution time. In~\cite{gonza} the authors study how different kinds of synchronizations (via all the shared actions, some of them or none) impact in the diagnosis analysis. 

\paragraph{Motivation} Suppose different groups of people are commanded to build different components of a system. Even if each component is diagnosable, it is not always the case that the resulting system has such property\footnote{See for example $C,D$ and $C \times D$ in Figures~\ref{fig:lts} and~\ref{fig:prod}.}. In~\cite{gonza} the authors show that with different kinds of synchronizations, the diagnosability of the global system can not be inferred directly from the diagnosability of each component. 

We propose a framework where each component only shares with the rest a fault free version of its own, maybe the specification of its ideal behavior. Then, each component should not only be diagnosable, but also its interaction with the fault free version of the others, i.e. its synchronous product with fault free version of the other components. Therefore, our diagnosability analysis can be distributed.

\paragraph{Paper organization} Section~\ref{sec:DDES} presents the formal model that we use for modeling each component, the parallel composition between them and the notion of diagnosability. In Section~\ref{sec:DDA}, we develop our analysis method, showing how the diagnosability of each component synchronizing with fault free versions of the other components influences the diagnosability property of the overall system. Section~\ref{sec:daddy} presents our tool DADDY and some experimental results. We conclude and discuss about future work in Section~\ref{sec:CFW}.

\section{Diagnosability Analysis}\label{sec:DDES}

\subsection{Model of the system}

We consider a distributed system composed of two autonomous components $G_1, G_2$ that communicate with each other by all their shared observable actions. The local model of a component is defined as a Labeled Transition System.

\begin{definition}\label{LTS}
  A Labeled Transition System (LTS) is a tuple $G = (Q, \Sigma, \delta, q_0)$ where
  \begin{itemize}
    \item $Q$ is a finite set of states,
    \item $\Sigma$ is a finite set of actions,
    \item $\delta$ is a partial transition function, and 
    \item $q_0$ the initial state, with $q_0 \in Q$.
  \end{itemize}
\end{definition}

As usual in diagnosability analysis, some of the actions of $\Sigma$ are observable while the rest are unobservable. Thus the set of actions $\Sigma$ is partitioned as $\Sigma = \Sigma_o \uplus \Sigma_{uo}$ where $\Sigma_o$ represents the observable actions and $\Sigma_{uo}$ the unobservable ones. 

The faults to diagnose are considered unobservable, i.e. $\Sigma_F \subseteq \Sigma_{uo}$, as faults that are observable can be easily diagnosable.

As usual in diagnosability analysis, we made the following assumptions about our systems.

\begin{assumption}
  We only consider (live) systems where there is a transition defined at each state, i.e. the system cannot reach a point at which no action is possible. 
\end{assumption}

\begin{assumption}
  The system does not contain cycles of unobservable actions.
\end{assumption}

Note that, these assumptions together assure that all our systems are free of observation starvation.

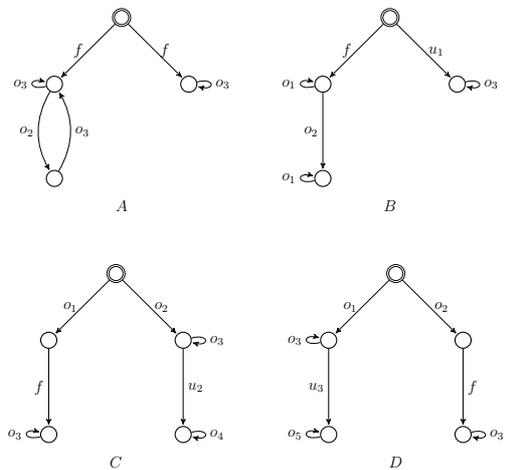
\begin{figure}[h]\label{Figure1}
  \centering
  \subfigure{\scalebox{.5}{\begin{tikzpicture}[->,>=stealth',shorten >=1pt,auto,node distance=2.5cm,semithick]
\tikzstyle{every state}=[draw]

  \node[accepting,state] 		(A)                    {};
  \node[state]         	(B) [below left of=A] {};
  \node[state]         	(C) [below right of=A] {};
  \node[state]         	(D) [below of=B] {};

  \path	(A)	edge [left]		node {\Large $f$}		(B)
            	(A)	edge	 [right]	node {\Large $f$}		(C)
	         (B)	edge	 [bend right, left]		node {\Large $o_2$}		(D)
		(B)	edge	 [loop left]	node {\Large $o_3$}		(B)
		(D)	edge	 [bend right,right]	node {\Large $o_3$}		(B)
		(C)	edge	 [loop right]	node {\Large $o_3$}		(C);
		
\node[] (name) at (0,-5) {\Large $A$};					
\node[] (null) at (0,-6) {};					

\end{tikzpicture}}}
  \hspace{3.5mm}
  \subfigure{\scalebox{.5}{\begin{tikzpicture}[->,>=stealth',shorten >=1pt,auto,node distance=2.5cm,semithick]
\tikzstyle{every state}=[draw]

  \node[accepting,state] 		(A)                    {};
  \node[state]         	(B) [below left of=A] {};
  \node[state]         	(C) [below right of=A] {};
  \node[state]         	(D) [below of=B] {};

  \path	(A)	edge [left]		node {\Large{\bf $f$}}		(B)
  		(B)	edge	 [loop left]	node {\Large $o_1$}		(B)
            	(A)	edge	 [right]	node {\Large $u_1$}		(C)
	         (B)	edge	 [left]		node {\Large $o_2$}		(D)
		(D)	edge	 [loop left]	node {\Large $o_1$}		(D)
		(C)	edge	 [loop right]	node {\Large $o_3$}		(C);
		
\node[] (name) at (0,-5) {\Large $B$};							
\node[] (null) at (0,-6) {};					

\end{tikzpicture}}}
  \subfigure{\scalebox{.5}{\begin{tikzpicture}[->,>=stealth',shorten >=1pt,auto,node distance=2.5cm,semithick]
\tikzstyle{every state}=[draw]

  \node[accepting,state] 		(A)                    {};
  \node[state]         	(B) [below left of=A] {};
  \node[state]         	(C) [below right of=A] {};
  \node[state]         	(D) [below of=B] {};
  \node[state]         	(E) [below of=C] {};

  \path	(A)	edge [left]		node {\Large $o_1$}		(B)
            	(A)	edge	 [right]		node {\Large $o_2$}		(C)
	         (C)	edge	 [loop right]	node {\Large $o_3$}		(C)
		(B)	edge	 [left]		node {\Large $f$}		(D)
		(D)	edge	 [loop left]	node {\Large $o_3$}		(D)	
		(C)	edge	 [right]		node {\Large $u_2$}		(E)
		(E)	edge	 [loop right]	node {\Large $o_4$}		(E);

\node[] (name) at (0,-5) {\Large $C$};			

\end{tikzpicture}}}
  \hspace{5mm}
  \subfigure{\scalebox{.5}{\begin{tikzpicture}[->,>=stealth',shorten >=1pt,auto,node distance=2.5cm,semithick]
\tikzstyle{every state}=[draw]

  \node[accepting,state] 		(A)                    {};
  \node[state]         	(B) [below right of=A] {};
  \node[state]         	(C) [below left of=A] {};
  \node[state]         	(D) [below of=B] {};
  \node[state]         	(E) [below of=C] {};

  \path	(A)	edge [right]		node {\Large $o_2$}		(B)
            	(A)	edge	 [left]		node {\Large $o_1$}		(C)
	         (C)	edge	 [loop left]	node {\Large $o_3$}		(C)
		(B)	edge	 [right]		node {\Large $f$}		(D)
		(D)	edge	 [loop right]	node {\Large $o_3$}		(D)	
		(C)	edge	 [left]		node {\Large $u_3$}		(E)
		(E)	edge	 [loop left]	node {\Large $o_5$}		(E);

\node[] (name) at (0,-5) {\Large $D$};			

\end{tikzpicture}}}
  \caption{Specification of four components modeled by LTSs}
   \label{fig:lts}
\end{figure}

Figure~\ref{fig:lts} shows four components modeled by the LTSs $A,B,C$ and $D$ where $o_1,o_2,o_3,o_4,o_5 \in \Sigma_o$ and $u_1,u_2,u_3 \in \Sigma_{uo}$. The special action $f \in \Sigma_F$ is the fault to be diagnosable.

A path from state $q_i$ to state $q_j$ in $G$ is a sequence $q_i \cdot a_i \cdot q_{i+1} \dots a_{j-1} \cdot q_j$ such that $(q_k, a_k, q_{k+1}) \in \delta$ for $i \leq k \leq j-1$. The set of paths in $G$ is denoted by $\paths G$.

The trace associated with any given path consists of its sequence of actions (i.e., for a path $\rho = q_0 \cdot a_0 \cdot q_1 \dots a_{n-1} \cdot q_n$ we have $\trace \rho = a_0 \cdot a_1 \dots a_n$). Given a trace, $\sigma = a_0 \cdot a_1 \dots a_n$, we denote as $f \in \sigma$ when there exists $i$ such that $f = a_i$. As our systems are live, we only consider infinity traces where the infinite repetition of an actions $a$ is denoted by $\widehat{a}$. The set of all traces starting in $q_0$ is denoted by $\traces G$. As we consider nondeterministic systems, the same trace can belong to several paths. The set of possible paths of a trace $\sigma$ in $G$ are: $\camino \sigma = \{ \rho \in \paths G \mid \trace \rho = \sigma \}$. 

The observation of a trace is given by the following definition.

\begin{definition} \label{def:obs}
Let $\sigma \in \Sigma^*$, then
\begin{eqnarray*}
  obs(\sigma) & = &
  \left \{
  \begin{array}{l l l}
    \epsilon & \text{if } \sigma = \epsilon \\
    a\!\cdot\! obs(\sigma') & \text{if } \sigma = a\!\cdot\!\sigma' \land a\!\in\!\Sigma_o \\
    obs(\sigma') & \text{if } \sigma = a\!\cdot\!\sigma' \land a\!\not \in\!\Sigma_o \\
  \end{array}
  \right.
\end{eqnarray*}
\end{definition}

The communication between two components is given by their synchronous product where the synchronizing actions are all the shared observable ones.

\begin{definition} \label{def_synchronization}
  Given two local components $G_1 = (Q^1, \Sigma^1, \delta^1, q^1_0)$ and $G_2 = (Q^2, \Sigma^2, \delta^2, q^2_0)$, the behavior of the global system is given by their synchronous product $G_1 \times G_2 = (Q^1 \times Q^2, \Sigma^1 \cup \Sigma^2, \delta^{1 \times 2}, (q^1_0,q^2_0))$ where $\delta^{1 \times 2}$ is defined as follows
\begin{eqnarray*}
  \delta^{1 \times 2}((q^1_i,q^2_j),a) \hspace{-3mm} & =\hspace{-3.5mm} &
  \left \{\hspace{-2.5mm}
  \begin{array}{l l l}
    (\delta^1(q^1_i\!,\!a),\delta^2(q^2_j\!,\!a)) & \text{if}\ a\!\in\!\Sigma^1_o \cap \Sigma^2_o \\
    (\delta^1(q^1_i,a), q^2_j) & \text{if}\ a\!\in\!\Sigma^1 \wedge a\!\not \in\!\Sigma^2 \\
    (q^1_i, \delta^2(q^2_j,a)) & \text{if}\ a\!\in\!\Sigma^2 \wedge a\!\not \in\!\Sigma^1 \\
  \end{array}
  \right.
\end{eqnarray*}
\end{definition}

Given a path in the global system, we can project it to a single component.

\begin{definition} \label{def:projrun}
  Let $\rho\!\in\!\paths{G_1\!\times\!G_2}$, its projection in $G_i$ is
\begin{eqnarray*}
  P_i((q^1, q^2)) & = & q^i \\
  P_i((q^1, q^2)\!\cdot\!a\!\cdot\!\rho') & = & 
  \left \{
  \begin{array}{lcl}
  q^i\!\cdot\!a\!\cdot\!P_i(\rho') & &\text{if } \exists\ \delta^i(q^i, a) \\
  P_i(\rho') & & \text{otherwise}
  \end{array}
  \right.
\end{eqnarray*}
\end{definition}

For a trace in the global system, we define the projections to know which actions belong to a certain component.

\begin{definition} \label{def:projtrace}
 Let $\sigma$ be a trace in $\traces{G_1 \times G_2}$, $\sigma'$ is its projection in $G_i$, denoted $\proj{i}{\sigma}=\sigma'$, iff
\begin{center}
$\exists\ \rho\!\in\!\camino{\sigma} : \trace{P_i(\rho)}\!=\!\sigma'$
\end{center}
\end{definition}

\begin{example}
  Let $\sigma = o_1 f  o_3 u_3  \widehat{o}_5$ be a trace in $\traces{C \times D}$ from Figure~\ref{fig:prod}, its projection in component $C$ is given by $\proj{C}{\sigma} = o_1 f o_3$ and its projection in component $D$ is given by $\proj{D}{\sigma} = o_1 o_3 u_3 \widehat{o}_5$. These projections are traces of the corresponding components $C$ and $D$ from Figure~\ref{fig:lts}. Note that projections of an infinite trace from the global system can be finite in one of the components.
\hspace{1cc}

\begin{figure}[h]
  \centering
  \subfigure{\scalebox{.34}{\begin{tikzpicture}[->,>=stealth',shorten >=1pt,auto,node distance=2.5cm,semithick]
\tikzstyle{every state}=[draw]

  \node[accepting,state] 		(0)                    {};
  \node[state]         	(5) [below left of=0] {};
  \node[state]         	(9) [below right of=0] {};
  
  \node[state]         	(4) [below left of=5] {};
  \node[state]         	(2) [above left of=4] {};
  \node[state]         	(1) [below of=4] {};
  \node[state]         	(8) [below of=1] {};
    
  \node[state]         	(7) [below right of=9] {};
  \node[state]         	(3) [above right of=7] {};
  \node[]         	(null) [below of=7] {};
  \node[state]         	(6) [below of=null] {};

  \path	(0)	edge [above]		node {\Huge $f$}		(2)
  		(0)	edge	 [left]		node {\Huge $f$}		(5)
		(0)	edge	 [right]	node {\Huge $f$}		(9)
            	(0)	edge	 [above]		node {\Huge $u_1$}		(3)

  		(2)	edge	 [left]		node {\Huge $f$}		(4)
		(5)	edge	 [right]	node {\Huge $f$}		(4)
            	(9)	edge	 [left]		node {\Huge $u_1$}		(7)
  		(3)	edge	 [right]		node {\Huge $f$}		(7)

  		(2)	edge	 [left]		node {\Huge $f$}		(8)
		(5)	edge	 [right]	node {\Huge $u_1$}		(6)
            	(9)	edge	 [left]		node {\Huge $f$}		(8)
  		(3)	edge	 [right]		node {\Huge $f$}		(6)
		
  		(4)	edge	 [right]		node {\Huge $o_2$}		(1)

  		(2)	edge	 [loop left]		node {\Huge $o_1$}		(2)
		(4)	edge	 [loop right]	node {\Huge $o_1$}		(4)
            	(7)	edge	 [loop left]		node {\Huge $o_3$}		(7)
  		(1)	edge	 [loop right]		node {\Huge $o_1$}		(1)
  		(8)	edge	 [loop left]		node {\Huge $o_1$}		(8)
		(6)	edge	 [loop right]	node {\Huge $o_3$}		(6);
				
\node[] (name) at (0,-9) {\Huge $A \times B$};			

\end{tikzpicture}}}
  \subfigure{\scalebox{.34}{\begin{tikzpicture}[->,>=stealth',shorten >=1pt,auto,node distance=2.5cm,semithick]
\tikzstyle{every state}=[draw]

  \node[accepting,state] 		(0)                    {};
  \node			(null) [below of=0] {};
  \node[state]         	(4) [left of=null] {};
  \node[state]         	(1) [below left of=4] {};
  \node[state]         	(5) [below right of=4] {};
  \node[state]         	(6) [below right of=1] {};
  \node[state]         	(2) [right of=null] {};
  \node[state]         	(7) [below left of=2] {};
  \node[state]         	(8) [below right of=2] {};
  \node[state]         	(9) [below right of=7] {};

  \path	(0)	edge [left]		node {\Huge $o_1$}		(4)
  		(4)	edge	 [left]		node {\Huge $f$}		(1)
		(4)	edge	 [right]	node {\Huge $u_3$}		(5)
            	(1)	edge	 [left]		node {\Huge $u_3$}		(6)
	         (5)	edge	 [right]	node {\Huge $f$}		(6)

		(0)	edge	 [right]	node {\Huge $o_2$}		(2)
  		(2)	edge	 [left]		node {\Huge $u_2$}		(7)
		(2)	edge	 [right]	node {\Huge $f$}		(8)
            	(7)	edge	 [left]		node {\Huge $f$}		(9)
	         (8)	edge	 [right]	node {\Huge $u_2$}		(9)

		(1)	edge	 [loop left]	node {\Huge $o_3$}		(1)
		(5)	edge	 [loop left]node {\Huge $o_5$}		(5)
		(6)	edge	 [loop below]	node {\Huge $o_5$}		(6)

		(7)	edge	 [loop right]	node {\Huge $o_4$}		(7)
		(8)	edge	 [loop right]	node {\Huge $o_3$}		(8)
		(9)	edge	 [loop below]	node {\Huge $o_4$}		(9);
				
\node[] (name) at (0,-9) {\Huge $C \times D$};			

\end{tikzpicture}}}
  \caption{Synchronous product of components $A, B$ and $C,D$}
  \label{fig:prod}
\end{figure}
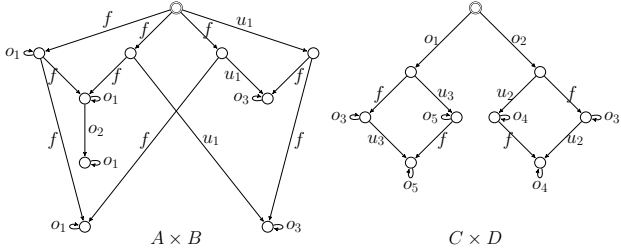
\end{example}

As the projection operator only erases actions in a trace, it is easy to see that every fault belonging to a projection of such a trace, also belongs to the trace in the global system as it is shown by the following result.

\begin{proposition} \label{proj_inh_fault}
For every trace $\sigma$ in $\traces{G_1 \times G_2}$ with $\proj{i}{\sigma} = \sigma_i$, we have
$$\text{if }\ \  f \in \sigma_i \ \ \text{ then }\ \  f \in \sigma$$
\end{proposition}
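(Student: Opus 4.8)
The plan is to peel back the definition of the trace projection to a statement about \emph{paths}, and then establish the path-level claim by a routine induction. The whole content is that path projection $P_i$ may delete actions but never creates new ones.

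First I would unfold Definition~\ref{def:projtrace}: the hypothesis $\proj{i}{\sigma} = \sigma_i$ means there exists a path $\rho \in \camino{\sigma}$ with $\trace{P_i(\rho)} = \sigma_i$, and by definition of $\camino{\sigma}$ we also have $\trace{\rho} = \sigma$. Hence it suffices to prove the following auxiliary fact: for every $\rho \in \paths{G_1 \times G_2}$, every action occurring in $\trace{P_i(\rho)}$ already occurs in $\trace{\rho}$ (i.e. the set of actions of $\trace{P_i(\rho)}$ is contained in the set of actions of $\trace{\rho}$).

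Second, I would prove this auxiliary fact by induction on the structure of $\rho$ as given in Definition~\ref{def:projrun}. In the base case $\rho = (q^1,q^2)$ we have $P_i(\rho) = q^i$, so $\trace{P_i(\rho)} = \trace{\rho} = \epsilon$ and there is nothing to check. For the inductive step $\rho = (q^1,q^2)\cdot a\cdot\rho'$, Definition~\ref{def:projrun} distinguishes two cases. If $\delta^i(q^i,a)$ is defined, then $P_i(\rho) = q^i\cdot a\cdot P_i(\rho')$, so the actions of $\trace{P_i(\rho)}$ are $a$ together with the actions of $\trace{P_i(\rho')}$; by the induction hypothesis the latter all occur in $\trace{\rho'}$, and $a$ is the first action of $\trace{\rho}$, so all of them occur in $\trace{\rho}$. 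If $\delta^i(q^i,a)$ is not defined, then $P_i(\rho) = P_i(\rho')$, and by the induction hypothesis the actions of $\trace{P_i(\rho)} = \trace{P_i(\rho')}$ all occur in $\trace{\rho'}$, hence in $\trace{\rho}$. This closes the induction.

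Finally I would instantiate with the fault: since $f \in \sigma_i = \trace{P_i(\rho)}$, the auxiliary fact gives $f \in \trace{\rho} = \sigma$, which is exactly the claim. The only mild subtlety is that our traces are infinite, so strictly speaking the induction is carried out on finite prefixes; but an action occurs in an infinite trace precisely when it occurs in some finite prefix, so the argument transfers verbatim. I do not anticipate any genuine obstacle here: once the definitions are unwound, the statement is immediate from the fact that $P_i$ never introduces symbols.
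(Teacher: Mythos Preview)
Your proof is correct and follows essentially the same idea as the paper: the paper does not give a formal argument but only remarks that ``the projection operator only erases actions in a trace,'' and your induction on the structure of $\rho$ via Definition~\ref{def:projrun} is precisely the rigorous version of that observation. The handling of infinite traces through finite prefixes is the right way to make the induction go through.
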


When two components synchronize in all their shared actions, if two traces of the global system have the same observability and we project them to the same component, the resulting projections will also have the same observability. This result is captured by Proposition~\ref{proj_obs}.

\begin{proposition} \label{proj_obs}
Given two traces $\sigma$ and $\alpha$ in $\traces{G_1 \times G_2}$ with $\proj{i}{\sigma} = \sigma_i \text{ and } \proj{i}{\alpha} = \alpha_i$, we have
$$\text{if}\ \ obs(\sigma) = obs(\alpha)\ \ \text{ then }\ \ obs(\sigma_i) = obs(\alpha_i)$$
\end{proposition}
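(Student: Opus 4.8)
The plan is to reduce the statement to a purely combinatorial fact about subsequences: for every $\sigma \in \traces{G_1 \times G_2}$, the observation $obs(\sigma_i)$ of the projection is completely determined by $obs(\sigma)$ — concretely, it is the subsequence of $obs(\sigma)$ obtained by keeping only the letters of $\Sigma^i_o$. Once this is in place the proposition is immediate: if $obs(\sigma) = obs(\alpha)$, then deleting from both equal words all letters outside $\Sigma^i_o$ yields the same word, i.e. $obs(\sigma_i) = obs(\alpha_i)$.

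The first and main step is to understand the projection at the level of traces. Fixing a path $\rho \in \camino{\sigma}$, I would examine position by position which letters of $\sigma$ survive in $\trace{P_i(\rho)}$. By Definition~\ref{def_synchronization}, a transition on a letter $a$ exists in $G_1 \times G_2$ in exactly one of three ways: $a \in \Sigma^1_o \cap \Sigma^2_o$, in which case both $\delta^1$ and $\delta^2$ are defined on $a$ at the current state; or $a \in \Sigma^1 \setminus \Sigma^2$, in which case $\delta^1$ is defined on $a$ while $a$ is not an action of $G_2$ at all; or symmetrically $a \in \Sigma^2 \setminus \Sigma^1$. Matching these cases against Definition~\ref{def:projrun}, a letter $a$ of $\sigma$ is retained by $P_i$ precisely when $a \in \Sigma^i$ — a condition on $a$ alone, independent of which path $\rho$ witnesses the nondeterministic trace $\sigma$. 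Hence $\trace{P_i(\rho)}$ is the same for all $\rho \in \camino{\sigma}$, so by Definition~\ref{def:projtrace} the value $\sigma_i$ is well defined, and writing $\sigma|_X$ for the subsequence of $\sigma$ keeping only letters of $X$, we get $\sigma_i = \sigma|_{\Sigma^i}$. Formally I would prove this by induction on the length of a finite prefix of $\rho$ (base case: the empty path, $P_i((q^1,q^2)) = q^i$; inductive step: the three cases above), and extend to infinite traces letter by letter, since both $obs$ and $P_i$ are defined by recursion on prefixes.

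The second step is bookkeeping with Definition~\ref{def:obs}, which says exactly that $obs(\tau) = \tau|_{\Sigma_o}$; for the product the observable alphabet is $\Sigma^1_o \cup \Sigma^2_o$. Therefore $obs(\sigma) = \sigma|_{\Sigma^1_o \cup \Sigma^2_o}$, while $obs(\sigma_i) = (\sigma|_{\Sigma^i})|_{\Sigma^i_o} = \sigma|_{\Sigma^i_o}$ since $\Sigma^i_o \subseteq \Sigma^i$. Because $\Sigma^i_o \subseteq \Sigma^1_o \cup \Sigma^2_o$, restricting $obs(\sigma)$ further to $\Sigma^i_o$ gives back $\sigma|_{\Sigma^i_o}$, i.e. $obs(\sigma_i) = obs(\sigma)|_{\Sigma^i_o}$ and likewise $obs(\alpha_i) = obs(\alpha)|_{\Sigma^i_o}$; the hypothesis $obs(\sigma) = obs(\alpha)$ then finishes the argument.

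The step I expect to be the real obstacle is the first one: making precise that the set of letters surviving a projection depends only on each letter (through membership in $\Sigma^i$) and not on the particular path realizing a nondeterministic trace. This is exactly where the assumption of synchronization on \emph{all} shared observable actions is used, and it is also what makes $\sigma_i$ well defined at all. A secondary point I would state explicitly is the compatibility of observability across the product — no action that is observable in one component and unobservable in the other — which is already tacit in Definition~\ref{def_synchronization} (its three cases are exhaustive only under this hypothesis) and is what makes the expression $obs(\sigma_i)$ meaningful. After those two points, everything reduces to elementary manipulation of subsequences.
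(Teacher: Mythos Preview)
Your argument is correct and takes a genuinely different route from the paper. The paper proves the proposition by a double induction on the structure of $\sigma$ and $\alpha$ simultaneously, with a case analysis on the first letters of each (the delicate case being $\sigma=a\cdot\sigma'$, $\alpha=b\cdot\alpha'$ with $a\in\Sigma^1_o\cap\Sigma^2_o$ but $b\notin\Sigma^1_o\cap\Sigma^2_o$). You instead factor the statement through a single-trace identity, namely $obs(\sigma_i)=obs(\sigma)|_{\Sigma^i_o}$, from which the two-trace conclusion is immediate. Your decomposition is cleaner: it isolates a reusable lemma, it makes explicit \emph{where} the hypothesis ``synchronize on all shared observable actions'' is consumed (in showing that retention of an observable letter by $P_i$ depends only on membership in $\Sigma^i_o$), and it avoids the bookkeeping of a simultaneous induction. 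The paper's approach, by contrast, handles the interaction of the two traces directly and so does not need to argue that $\sigma_i$ is determined by $\sigma$.

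One small overreach to flag: your intermediate claim $\sigma_i=\sigma|_{\Sigma^i}$ is stronger than what the paper's setting guarantees. The paper explicitly allows the same fault $f$ to occur in several components, and in the running examples $f\in\Sigma^1_{uo}\cap\Sigma^2_{uo}$ is interleaved rather than synchronized (cf.\ Example~1, where $f$ appears in $\proj{C}{\sigma}$ but not in $\proj{D}{\sigma}$ even though $f\in\Sigma^D$). For such a shared unobservable letter, whether $P_i$ retains it depends on which component fired it along the chosen path $\rho$, so $\sigma_i$ need not equal $\sigma|_{\Sigma^i}$ and need not even be uniquely determined by $\sigma$. This does not harm your proof: the identity you actually need, $obs(\sigma_i)=obs(\sigma)|_{\Sigma^i_o}$, survives, because for \emph{observable} letters your three-case analysis of Definition~\ref{def_synchronization} does show that retention by $P_i$ is equivalent to membership in $\Sigma^i_o$, independently of $\rho$. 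Just weaken the intermediate statement accordingly.
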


This result is proved by double induction in the structure of $\sigma$ and $\alpha$. We analyze several cases depending on the existence of the projections. One of the most critical cases is when $\sigma = a\!\cdot\!\sigma', \alpha = b\!\cdot\!\alpha', a \in \Sigma^1_o \cap \Sigma^2_o, \text{but } b \not\in \Sigma^1_o \cap \Sigma^2_o$ as it has several particular sub-cases. Note that this result only holds when the synchronization is done in all the set of shared actions.

\subsection{Diagnosability condition}

We present now the notion of diagnosability. Informally, a fault $f \in \Sigma_F$ is diagnosable if it is possible to detect, within a finite delay, occurrences of such a fault using the record of observed actions. In other words, a fault is not diagnosable if there exist two infinite paths from the initial state with the same infinite sequence of observable actions but only one of them contains a fault.

\begin{definition}
  Let $f$ be a fault in $\Sigma_F$, $f$ is diagnosable in $G$ iff
  $$\forall \sigma, \alpha \in \traces G : \text{if } obs(\sigma) = obs(\alpha)$$ $$\text{ and } f \in \sigma \text{ then } f \in \alpha$$
  
  The system $G$ is diagnosable, denoted by \diag G, if and only if every fault $f \in \Sigma_F$ is diagnosable.
\end{definition}

The previous definition introduced in~\cite{laura} is a reformulation of the one presented in~\cite{Sampath1995}.

\begin{example} \label{ex:prod}
 Let consider the components $A$ and $B$ from Figure~\ref{fig:lts}. The only pair of traces in $A$ with the same observability are of the form $f \widehat{o}_3$ (one for each branch from the initial state), as both traces contain the fault $f$, system A is diagnosable. In the case of $B$, each observable trace corresponds to a unique path, therefore $B$ is diagnosable. 

Now, consider system $A \times B$ from Figure~\ref{fig:prod}, we can see that every trace contains a fault, therefore $A \times B$ is diagnosable. On the contrary, in system $C \times D$ we have two traces, $o_2 u_2 \widehat{o}_4$ and $o_2 f u_2 \widehat{o}_4$ that have the same observability, but one of them contains a fault and the other does not, therefore the system $C \times D$ is not diagnosable.
\end{example}

\section{Distributing the diagnosability analysis}\label{sec:DDA}

The notion of diagnosability is introduced in~\cite{Sampath1995} assuming a centralized architecture of the system. In order to check the diagnosability property in distributed systems, the synchronous product of components is computed and such a product is given as an input to an algorithm that tests its diagnosability (usually based on the twin plant method). The size of such a product grows exponentially with respect of the size of the components, resulting in an inefficient algorithm. When dealing with real applications, such as telecommunication networks or power distribution networks, the centralized approach is clearly unrealistic because of the size of those applications. Moreover, this approach does not exploit the fact that such systems are distributed.

In~\cite{Schumann07scalablediagnosability,distributeddiag} the authors distribute the search for non-distinguishable behaviors based on local verifiers and local twin plants. The local information is propagated until a verdict is made or, in the worst case, the global system is built. Their result is based on the assumption that a fault can occur in exactly one component.

In this section we present a method that allows to decide the diagnosability of a distributed system in terms of the diagnosability of each faulty component synchronizing with fault free versions of the remaining ones. Basically, we compose each component with a fault free version of the other components and analyze their diagnosability in parallel. To the best of our knowledge, it is the first method that allows to parallelize the diagnosability analysis.

\begin{algorithm} \label{algo}
  \caption{}
  \begin{algorithmic}[1] \label{algo}
    \REQUIRE A LTS $G = (Q, \Sigma, \delta, q_0)$
    \ENSURE An $f$-fault free version of $G$
      \STATE $Q' := \{ q_0 \}$ , $\delta' := \emptyset$ , $Q := Q \setminus \{ q_0 \}$
      \WHILE{$\exists (q'\!,\!x,\!q)\!:\!q'\!\!\in\!Q'\!\wedge (q'\!,\!x,\!q)\!\in\!\delta \wedge (q',\!x,\!q)\!\not\in\!\delta'$}
        \IF{$x \not = f$}
          \STATE $Q' := Q' \cup \{ q \}$
          \STATE $\delta' := \delta' \cup (q',x,q)$
        \ENDIF
        \STATE $\delta := \delta \setminus (q',x,q)$
      \ENDWHILE
    \RETURN $G^f = (Q',\Sigma,\delta',q_0)$
  \end{algorithmic}
\end{algorithm}

For testing the diagnosability of a fault $f \in \Sigma_F$ in the global system, instead of computing the whole composition, we consider one component and compose it with the fault free versions of the others. These fault free versions may be taken as the specification of each component, when provided, or can be computed by removing the fault $f$ in the component using Algorithm~\ref{algo} and considering such as the correct behavior of the system.

\vspace{0.3cc}
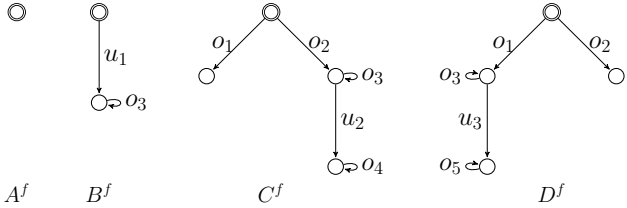
\begin{figure}[h]
  \centering
  \subfigure{\scalebox{.48}{\begin{tikzpicture}[->,>=stealth',shorten >=1pt,auto,node distance=2.5cm,semithick]
\tikzstyle{every state}=[draw]

  \node[accepting,state] 		(A)                    {};
%
		
\node[] (name) at (0,-5) {\huge $A^f$};					

\end{tikzpicture}}}
  \hspace{4mm}
  \subfigure{\scalebox{.48}{\begin{tikzpicture}[->,>=stealth',shorten >=1pt,auto,node distance=2.5cm,semithick]
\tikzstyle{every state}=[draw]

  \node[accepting,state] 		(A)                    {};
  \node[state]         	(C) [below of=A] {};

  \path	
            	(A)	edge	 [right]	node {\Huge $u_1$}		(C)
		(C)	edge	 [loop right]	node {\Huge $o_3$}		(C);
		
\node[] (name) at (0,-5) {\huge $B^f$};							

\end{tikzpicture}}}
  \hspace{4mm}
  \subfigure{\scalebox{.48}{\begin{tikzpicture}[->,>=stealth',shorten >=1pt,auto,node distance=2.5cm,semithick]
\tikzstyle{every state}=[draw]

  \node[accepting,state] 		(A)                    {};
  \node[state]         	(B) [below left of=A] {};
  \node[state]         	(C) [below right of=A] {};
  \node[state]         	(E) [below of=C] {};

  \path	(A)	edge [left]		node {\Huge $o_1$}		(B)
            	(A)	edge	 [right]		node {\Huge $o_2$}		(C)
	         (C)	edge	 [loop right]	node {\Huge $o_3$}		(C)
		(C)	edge	 [right]		node {\Huge $u_2$}		(E)
		(E)	edge	 [loop right]	node {\Huge $o_4$}		(E);

\node[] (name) at (0,-5) {\huge $C^f$};			

\end{tikzpicture}}}
  \hspace{4mm}
  \subfigure{\scalebox{.48}{\begin{tikzpicture}[->,>=stealth',shorten >=1pt,auto,node distance=2.5cm,semithick]
\tikzstyle{every state}=[draw]

  \node[accepting,state] 		(A)                    {};
  \node[state]         	(B) [below right of=A] {};
  \node[state]         	(C) [below left of=A] {};
  \node[state]         	(E) [below of=C] {};

  \path	(A)	edge [right]		node {\Huge $o_2$}		(B)
            	(A)	edge	 [left]		node {\Huge $o_1$}		(C)
	         (C)	edge	 [loop left]	node {\Huge $o_3$}		(C)
		(C)	edge	 [left]		node {\Huge $u_3$}		(E)
		(E)	edge	 [loop left]	node {\Huge $o_5$}		(E);

\node[] (name) at (0,-5) {\huge $D^f$};			

\end{tikzpicture}}}
  \caption{Components $A,B,C$ and $D$ after removing their faults}
  \label{fig:reducedcompo}
\end{figure}

If we compose a component $G_i$ with the fault free version of $G_j$, meaning $G^f_j$, clearly the traces of the resulting system are those of $G_i \times G_j$ such that its projections in $G_j$ are fault free.

\begin{proposition} \label{def:redprod}
  Let  $G_i$ and $G_j$ be two LTSs, then $\sigma \in \traces{G_i\!\times\!G^f_j}$ iff 
$$\sigma \in \traces{G_i \times G_j} \wedge \forall \sigma_j : \proj{j}{\sigma} = \sigma_j \imp f \not \in \sigma_j$$
  \label{def:red}
\end{proposition}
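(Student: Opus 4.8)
The plan is to prove both directions of the biconditional by reasoning about paths, using Algorithm~\ref{algo} as the structural description of $G^f_j$. The key observation about Algorithm~\ref{algo} is that it performs a reachability exploration of $G_j$ in which every transition labelled $f$ is discarded (and no state is added through such a transition); hence $G^f_j$ is precisely the subsystem of $G_j$ consisting of all states and transitions reachable from $q^j_0$ without ever traversing an $f$-labelled edge. Equivalently, $\paths{G^f_j}$ is exactly the set of paths $\rho_j \in \paths{G_j}$ with $f \notin \trace{\rho_j}$ (every prefix of such a path is still $f$-free, so the exploration reaches all its states and keeps all its transitions; conversely any path produced by the algorithm is $f$-free by the guard on line~3). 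I would state this as a short preliminary claim and prove it by induction on the length of the path, then use it freely.

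For the forward direction, suppose $\sigma \in \traces{G_i \times G^f_j}$. Then $\sigma = \trace{\rho}$ for some $\rho \in \paths{G_i \times G^f_j}$. Since every transition of $G^f_j$ is a transition of $G_j$, the same synchronisation rules of Definition~\ref{def_synchronization} show $\rho$ is also a path of $G_i \times G_j$, so $\sigma \in \traces{G_i \times G_j}$; moreover $P_j(\rho) \in \paths{G^f_j}$, so by the preliminary claim $f \notin \trace{P_j(\rho)}$. Now let $\sigma_j$ be any projection of $\sigma$ in $G_j$: by Definition~\ref{def:projtrace} there is $\rho' \in \camino{\sigma}$ (a path of $G_i \times G_j$) with $\trace{P_j(\rho')} = \sigma_j$. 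The subtlety is that $\rho'$ need not be the $\rho$ we started with, because the system is nondeterministic. Here I would argue that $P_j(\rho')$ is a path of $G_j$ whose trace $\sigma_j = P_j(\sigma)$ is determined (as far as its sequence of $G_j$-actions goes) by $\sigma$ together with the synchronisation discipline; since $G_i \times G^f_j$ can already follow $\rho$ and $\rho$ witnesses that the $G_j$-component can read off all shared actions of $\sigma$ in an $f$-free way, and since in this construction the $G_j$-projection of any path realising $\sigma$ uses the same multiset of $G_j$-actions, we get $f \notin \sigma_j$. If a fully rigorous treatment of the nondeterministic case is wanted, the cleanest route is to invoke Proposition~\ref{proj_inh_fault} in the contrapositive after the other direction is in hand, or to note that $\traces{G_i\times G^f_j}$ is prefix-closed-compatible with the reachable-state structure; I would flag this as the one place needing care.

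For the converse, suppose $\sigma \in \traces{G_i \times G_j}$ and every projection $\sigma_j$ of $\sigma$ in $G_j$ is $f$-free. Pick $\rho \in \camino{\sigma}$, a path of $G_i \times G_j$. Then $P_j(\rho) \in \paths{G_j}$ and $\trace{P_j(\rho)}$ is one such projection $\sigma_j$, hence $f \notin \trace{P_j(\rho)}$; by the preliminary claim $P_j(\rho)$ is therefore a path of $G^f_j$. It remains to check that $\rho$ itself is a legal path of $G_i \times G^f_j$: each step $((q^1,q^2),a,(q'^1,q'^2))$ of $\rho$ that moves the $G_j$-component (i.e. $a \in \Sigma^j$) appears along $P_j(\rho)$, hence that transition survives in $\delta'$; the steps with $a \in \Sigma^i \setminus \Sigma^j$ leave the $G_j$-state untouched and use only $\delta^i$, which is unchanged; and the shared-action rule is applicable because $\Sigma^j$ (hence $\Sigma^j_o$) is the same for $G_j$ and $G^f_j$. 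So $\rho \in \paths{G_i \times G^f_j}$ and $\sigma \in \traces{G_i \times G^f_j}$.

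The main obstacle, as indicated above, is the forward direction in the presence of nondeterminism: from $\sigma \in \traces{G_i \times G^f_j}$ we must conclude that \emph{every} $G_j$-projection of $\sigma$ is $f$-free, not merely the one induced by the witnessing path. I expect to resolve this by the structural claim about $G^f_j$ together with the fact that, under synchronisation on all shared actions, the sequence of $G_j$-actions occurring in any $\rho' \in \camino{\sigma}$ is forced by $\sigma$ (this is essentially the content behind Proposition~\ref{proj_obs}), so its $f$-freeness is inherited from the existence of one $f$-free realisation inside $G^f_j$. Everything else is a routine transfer of transitions between $G_j$ and its $f$-free restriction via Definition~\ref{def_synchronization}.
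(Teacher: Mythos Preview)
The paper does not supply a proof of this proposition; it is stated and immediately used, evidently regarded as a direct consequence of how Algorithm~\ref{algo} builds $G^f_j$. Your plan is therefore strictly more detailed than anything the paper offers, and the overall structure---characterise $\paths{G^f_j}$ as exactly the $f$-free paths of $G_j$, then transfer paths between $G_i \times G_j$ and $G_i \times G^f_j$ step by step via Definition~\ref{def_synchronization}---is correct and is what any careful argument must do. The backward direction is clean as you give it.

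Your flagged obstacle in the forward direction, the universal quantifier over projections $\sigma_j$, is less delicate than you suggest. Under Definition~\ref{def_synchronization} a transition labelled $a$ in the product moves component $j$ precisely when $a \in \Sigma^j$, independently of the current product state; hence for every $\rho \in \camino{\sigma}$ the trace $\trace{P_j(\rho)}$ is simply the subsequence of $\sigma$ consisting of the actions lying in $\Sigma^j$, and so depends only on $\sigma$. The projection $\sigma_j$ is therefore unique as a trace, and the universal quantifier is vacuous. You do reach this conclusion at the end, so the proof closes; it would be cleaner to state it up front rather than presenting it as a subtlety. Note also that your alternative suggestion---invoking Proposition~\ref{proj_inh_fault} in the contrapositive---does \emph{not} work here: that would give $f \notin \sigma \Rightarrow f \notin \sigma_j$, but in the forward direction $\sigma$ may well contain $f$ contributed by $G_i$, so the needed premise is unavailable. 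Stick with the determinacy-of-projection argument.
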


Figure~\ref{fig:reducedcompo} shows components $A,B,C$ and $D$ after removing fault $f$ and Figure~\ref{fig:reduced} shows them synchronizing with the faulty components.

\begin{example}\label{ex:reduced}
 Let us consider the systems from Figure~\ref{fig:reduced}. System $A^f \times B$ is trivially diagnosable. In the case of system $A \times B^f$, it is easy to see that the observable traces are of the form $\widehat{o}_3$, but all traces containing $o_3$ also contain $f$ and therefore $A \times B^f$ is also diagnosable. In system $C^f \times D$, traces $\sigma = o_2 u_2 \widehat{o}_4$ and $\alpha = o_2 f u_2 \widehat{o}_4$ have the same observability, but $\alpha$ contains a fault and $\sigma$ does not. So, we can conclude that $C^f \times D$ is not diagnosable.
\end{example}

\begin{figure}
  \centering
  \subfigure{\scalebox{.48}{\begin{tikzpicture}[->,>=stealth',shorten >=1pt,auto,node distance=2.5cm,semithick]
\tikzstyle{every state}=[draw]

  \node[accepting,state] 		(0)                    {};
  \node[state]         	(1) [below of=0] {};

  \path	(0)	edge [left]		node {\Huge $f$}		(1)
		(1)	edge	 [loop right]	node {\Huge $o_1$}		(1);	
		
\node[] (name) at (0,-4) {\huge $A^{f} \times B$};						
\node[] (null) at (0,-5) {};					

\end{tikzpicture}}}
  \hspace{10mm}
  \subfigure{\scalebox{.48}{\begin{tikzpicture}[->,>=stealth',shorten >=1pt,auto,node distance=2.5cm,semithick]
\tikzstyle{every state}=[draw]

  \node[accepting,state] 		(0)                    {};
  \node[state]         	(1) [below of=0] {};
  \node[state]         	(2) [below left of=0] {};
  \node[state]         	(4) [below right of=0] {};
  \node[state]         	(3) [below of=2] {};
  \node[state]         	(5) [below of=4] {};

  \path	(0)	edge [left]		node {\Huge $u_1$}		(1)
  		(0)	edge	 [left]	node {\Huge $f$}			(2)
		(0)	edge	 [right]	node {\Huge $f$}		(4)
            	(2)	edge	 [left]	node {\Huge $u_1$}			(3)
	         (1)	edge	 [left]		node {\Huge $f$}		(3)
		(4)	edge	 [right]	node {\Huge $u_1$}		(5)
	         (1)	edge	 [right]		node {\Huge $f$}		(5)
		(3)	edge	 [loop left]	node {\Huge $o_3$}		(3)
		(5)	edge	 [loop right]	node {\Huge $o_3$}		(5);		
		
\node[] (name) at (0,-5) {\huge $A \times B^{f}$};					
\node[] (null) at (0,-6) {};					

\end{tikzpicture}}}
  \vspace{1.5mm}
  \subfigure{\scalebox{.48}{\begin{tikzpicture}[->,>=stealth',shorten >=1pt,auto,node distance=2.5cm,semithick]
\tikzstyle{every state}=[draw]

  \node[accepting,state] 		(0)                    {};
  \node[state]         	(4) [below right of=0] {};
  \node[state]         	(1) [below left of=4] {};
  \node[state]         	(5) [below right of=4] {};
  \node[state]         	(6) [below right of=1] {};
  \node[state]         	(2) [below left of=0] {};
  \node[state]         	(3) [below of=2] {};

  \path	(0)	edge [right]		node {\Huge $o_2$}		(4)
  		(4)	edge	 [left]		node {\Huge $u_2$}		(1)
		(4)	edge	 [right]	node {\Huge $f$}		(5)
            	(1)	edge	 [left]		node {\Huge $f$}		(6)
	         (5)	edge	 [right]	node {\Huge $u_2$}		(6)
		(0)	edge	 [left]	node {\Huge $o_1$}		(2)
	         (2)	edge	 [left]	node {\Huge $u_3$}		(3)

		(1)	edge	 [loop left]	node {\Huge $o_4$}		(1)
		(5)	edge	 [loop right]node {\Huge $o_3$}	(5)
		(6)	edge	 [loop left]	node {\Huge $o_4$}		(6)
		(3)	edge	 [loop left]	node {\Huge $o_5$}		(3);
		
\node[] (null) at (-5,0) {};	
\node[] (name) at (0,-6.5) {\huge $C^{f} \times D$};			

\end{tikzpicture}}}
  \vspace{1.5mm}
  \subfigure{\scalebox{.48}{\begin{tikzpicture}[->,>=stealth',shorten >=1pt,auto,node distance=2.5cm,semithick]
\tikzstyle{every state}=[draw]

  \node[accepting,state] 		(0)                    {};
  \node[state]         	(5) [below right of=0] {};
  \node[state]         	(6) [below of=5] {};
  \node[state]         	(3) [below left of=0] {};
  \node[state]         	(2) [below left of=3] {};
  \node[state]         	(4) [below right of=3] {};
  \node[state]         	(1) [below right of=2] {};

  \path	(0)	edge [right]		node {\Huge $o_2$}		(5)
  		(5)	edge	 [right]		node {\Huge $u_2$}		(6)
		(0)	edge	 [left]	node {\Huge $o_1$}		(3)
            	(3)	edge	 [left]		node {\Huge $f$}		(2)
	         (2)	edge	 [left]	node {\Huge $u_3$}		(1)
		(3)	edge	 [right]	node {\Huge $u_3$}		(4)
	         (4)	edge	 [right]	node {\Huge $f$}		(1)

		(6)	edge	 [loop right]	node {\Huge $o_4$}		(6)
		(2)	edge	 [loop left]node {\Huge $o_3$}	(2)
		(4)	edge	 [loop right]	node {\Huge $o_5$}		(4)
		(1)	edge	 [loop right]	node {\Huge $o_5$}		(1);

\node[] (null) at (5,0) {};			
\node[] (name) at (0,-6.5) {\huge $C \times D^{f}$};			
	
\end{tikzpicture}}}
  \caption{Composed systems after removing the faults in one of the components}
  \label{fig:reduced}
\end{figure}
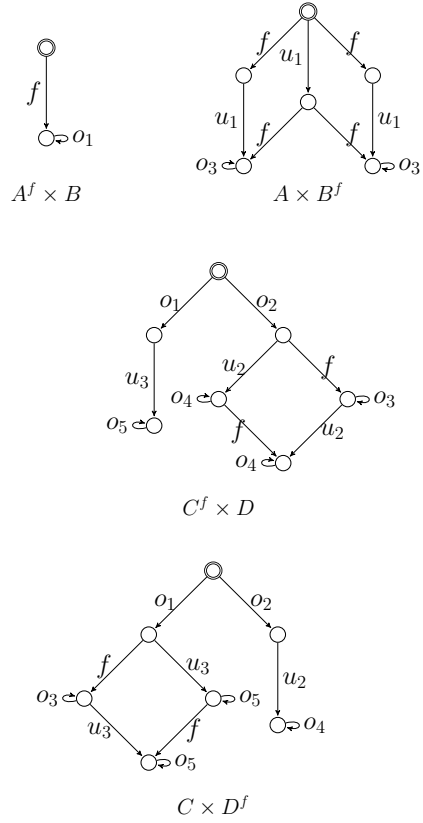

%

The following result states necessary conditions for the diagnosability of the global system, i.e. the non diagnosability of $G_1^f \times G_2$ or $G_1 \times G_2^f$ implies the non diagnosability of $G_1 \times G_2$.

\begin{theorem} \label{the:1}
  Let $G_1$ and $G_2$ be two LTSs, then 
$$\diag{G_1 \times G_2} \ \ \Rightarrow\ \  \diag{G_1^f \times G_2} \land \diag{G_1 \times G_2^f}$$
\end{theorem}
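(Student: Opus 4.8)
The plan is to prove the contrapositive: I will show that if either $G_1^f \times G_2$ or $G_1 \times G_2^f$ fails to be diagnosable, then so does $G_1 \times G_2$. Since the two conjuncts on the right-hand side are symmetric, it suffices to treat the case $\neg\,\diag{G_1^f \times G_2}$; the statement for $G_1 \times G_2^f$ then follows by exchanging the roles of $G_1$ and $G_2$.

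Assuming $G_1^f \times G_2$ is not diagnosable, I would unfold the definition of diagnosability to obtain a fault $g \in \Sigma_F$ and two infinite traces $\sigma, \alpha \in \traces{G_1^f \times G_2}$ with $obs(\sigma) = obs(\alpha)$, $g \in \sigma$, and $g \notin \alpha$. The crucial step is to transport this pair of witnesses into the full product. For that I would invoke Proposition~\ref{def:redprod}, applied with the fault-free component playing the role of $G_j^f$ (here $G_1^f$) and using that $\traces{G_1^f \times G_2} = \traces{G_2 \times G_1^f}$, which is immediate from the symmetry of $\delta^{1 \times 2}$ in Definition~\ref{def_synchronization}. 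The proposition yields $\traces{G_1^f \times G_2} \subseteq \traces{G_1 \times G_2}$, hence $\sigma, \alpha \in \traces{G_1 \times G_2}$.

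To conclude, I would note that the three properties of $(\sigma,\alpha)$ that matter here — being infinite, satisfying $obs(\sigma) = obs(\alpha)$, and $g \in \sigma$ with $g \notin \alpha$ — are intrinsic to the traces and do not depend on the system they are drawn from. Thus $\sigma$ and $\alpha$ already witness that $g$ is not diagnosable in $G_1 \times G_2$, so $\neg\,\diag{G_1 \times G_2}$, which is the contrapositive of the claim.

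I do not expect a real obstacle: the theorem is, at bottom, just the trace inclusion $\traces{G_i \times G_j^f} \subseteq \traces{G_i \times G_j}$ already recorded in Proposition~\ref{def:redprod}. The only points that deserve a sentence of care are (i) matching the indices of Proposition~\ref{def:redprod} to each conjunct and using commutativity of $\times$ in the $G_1^f \times G_2$ case, and (ii) observing that the argument is uniform in the diagnosed fault $g$: we delete the particular fault $f$ from one component, yet the witnesses may concern any $g \in \Sigma_F$, and the trace inclusion holds regardless. Note that neither Proposition~\ref{proj_inh_fault} nor Proposition~\ref{proj_obs} is needed here; those would instead enter a converse (sufficiency) direction rather than this necessary condition.
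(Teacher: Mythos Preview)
Your proposal is correct and follows essentially the same route as the paper's proof: both argue the contrapositive, use Proposition~\ref{def:redprod} to obtain the trace inclusion $\traces{G_1^f \times G_2}\subseteq\traces{G_1 \times G_2}$, and transport the non-diagnosability witnesses into the full product. Your write-up is in fact slightly more careful than the paper's in two respects --- you make explicit the commutativity needed to match the indices of Proposition~\ref{def:redprod}, and you distinguish the removed fault $f$ from the diagnosed fault $g$ --- but these are refinements of the same argument, not a different one.
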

\begin{proof}
Lets assume that $\neg \diag{G_1^f \times G_2}$, then there exist two traces $\sigma, \alpha \in \traces{G_1^f \times G_2}$ and $f$ such that $obs(\sigma)=obs(\alpha)$ with $f \in \sigma$, but $f \not \in \alpha$. We know from Proposition~\ref{def:red} that every trace in $G_1^f \times G_2$ is a trace in $G_1 \times G_2$, so we have found two traces of the global system with the same observability, one containing a fault and the other one not. Therefore $(G_1 \times G_2)$ is non-diagnosable. An analogous analysis can be made if $\neg \diag{G_1 \times G_2^f}$.
\end{proof}

\begin{example}
  We see in Example~\ref{ex:reduced} that $C^f \times D$ is non diagnosable. Using Theorem~\ref{the:1} we can conclude that $C \times D$ is non diagnosable. This result is consistent with the diagnosability analysis made in Example~\ref{ex:prod}.
\end{example}
\vspace{1cc}
As explained above, the idea is to build a diagnosable component and to test that its interaction with another fault free component is also diagnosable. We can then decide the diagnosability of $G_1 \times G_2$ in term of the diagnosability of $G_1, G_2, G_1^f \times G_2$ and $G_1 \times G_2^f$.
\vspace{1cc}

\begin{theorem} \label{the:2}
  Let $G_1$ and $G_2$ be two LTSs, then
\begin{eqnarray*}
  \left.
  \begin{array}{llc} 
    &\diag{G_1} \land \diag{G_1 \times G_2^f} \\
    & \\
    & \diag{G_2} \land \diag{G_1^f \times G_2} \\
  \end{array}
  \right \}
  \Rightarrow \diag{G_1 \times G_2}
\end{eqnarray*}
\end{theorem}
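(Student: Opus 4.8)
The plan is to argue by contraposition, showing that if $G_1\times G_2$ is not diagnosable then one of the four conjuncts on the left fails. So assume there are $\sigma,\alpha\in\traces{G_1\times G_2}$ with $obs(\sigma)=obs(\alpha)$, $f\in\sigma$ and $f\notin\alpha$. Fix witnessing paths $\rho\in\camino{\sigma}$ and $\pi\in\camino{\alpha}$, and set $\sigma_i=\trace{\proj{i}{\rho}}$ and $\alpha_i=\trace{\proj{i}{\pi}}$, so that $\proj{i}{\sigma}=\sigma_i$ and $\proj{i}{\alpha}=\alpha_i$. Proposition~\ref{proj_obs} gives $obs(\sigma_i)=obs(\alpha_i)$ for $i=1,2$, and since $f\notin\alpha$ the contrapositive of Proposition~\ref{proj_inh_fault} gives $f\notin\alpha_1$ and $f\notin\alpha_2$. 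Finally, because the projection operator only erases actions and every action enabled in the synchronous product is enabled in at least one component, the occurrence of $f$ in $\sigma$ must survive in $\sigma_1$ or in $\sigma_2$.

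The argument then splits into three cases according to where $f$ lands. If $f\in\sigma_1$ but $f\notin\sigma_2$, then the $G_2$-projections of both $\rho$ and $\pi$ are fault free, so by Proposition~\ref{def:red} both $\sigma$ and $\alpha$ belong to $\traces{G_1\times G_2^f}$; together with $obs(\sigma)=obs(\alpha)$, $f\in\sigma$ and $f\notin\alpha$, this contradicts $\diag{G_1\times G_2^f}$ (essentially the argument of Theorem~\ref{the:1}). Symmetrically, if $f\in\sigma_2$ but $f\notin\sigma_1$, we get a contradiction with $\diag{G_1^f\times G_2}$. In the remaining case $f$ occurs in \emph{both} $\sigma_1$ and $\sigma_2$; then $\sigma_1,\alpha_1$ are traces of $G_1$ with $obs(\sigma_1)=obs(\alpha_1)$, $f\in\sigma_1$ and $f\notin\alpha_1$, contradicting $\diag{G_1}$ (and equally well $\sigma_2,\alpha_2$ would contradict $\diag{G_2}$). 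This trichotomy is exactly why all four conjuncts are needed: the $G_i^f$ hypotheses take care of a fault confined to one component, the $\diag{G_i}$ hypotheses take care of a fault that appears in both.

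The step I expect to be the real obstacle is this last case, where $\sigma_i$ and $\alpha_i$ are fed into the diagnosability of $G_i$: a projection of an infinite product trace may be finite (as already noted after Definition~\ref{def:projtrace}), whereas diagnosability is quantified over the infinite traces of $G_i$. To bridge this I would use the two standing assumptions, liveness and absence of unobservable cycles. If some projection, say $\proj{j}{\pi}$, is finite, then from some point on $\pi$ performs no action moving $G_j$, and since it cannot have an infinite unobservable suffix its observable tail consists of infinitely many observable actions private to the other component; matching this tail across $obs(\sigma)=obs(\alpha)$ and keeping track of which projections stay infinite shows that there is an index $j\in\{1,2\}$ for which \emph{both} $\proj{j}{\rho}$ and $\proj{j}{\pi}$ are infinite while still $f\in\trace{\proj{j}{\rho}}$ and $f\notin\trace{\proj{j}{\pi}}$, and applying $\diag{G_j}$ to that pair yields the contradiction. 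Everything else is routine bookkeeping: that $\rho$ (resp.\ $\pi$) remains a legal path once its $G_2$-component is known to avoid $f$, which is immediate from the construction of $G^f$ in Algorithm~\ref{algo} together with Proposition~\ref{def:red}, and that projection distributes the occurrences of $f$ between the two components, which is immediate from Definitions~\ref{def:projrun} and~\ref{def:projtrace}.
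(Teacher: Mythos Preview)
Your proof is essentially the paper's own argument recast in contrapositive form: the paper gives a direct proof with a three-way case split on whether $\sigma$ and $\alpha$ lie in $\traces{G_i^f\times G_j}$, which via Proposition~\ref{def:red} is the same as your split on which projection of $\sigma$ carries the fault (your contrapositive assumption $f\notin\alpha$ simply collapses the paper's case~2), and Propositions~\ref{proj_inh_fault}, \ref{proj_obs}, \ref{def:red} are invoked at exactly the same points. Your closing paragraph on the possible finiteness of projections is in fact more careful than the paper, which in its case~3 applies $\diag{G_i}$ directly to $\sigma_i,\alpha_i$ without discussing whether these are infinite traces of $G_i$.
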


\begin{proof}
  Let assume that we have a fault $f \in \Sigma_F$ and two traces $\sigma, \alpha \in \traces{G_1 \times G_2}$ with $f \in \sigma$ and $obs(\sigma)=obs(\alpha)$, we need to prove that $f \in \alpha$. Consider the following cases:
  \begin{enumerate}
    \item if $\sigma, \alpha \in \traces{G_i^f \times G_j}$ we can prove by $(G_i^f \times G_j)$'s diagnosability that $f \in \alpha$ and then $G_1 \times G_2$ is diagnosable,
    \item if $\alpha \not \in \traces{G_i^f \times G_j}$, using the hypothesis that $\alpha \in \traces{G_i \times G_j}$, we can apply Proposition~\ref{def:red} and obtain that $\exists \alpha_i: \proj{i}{\alpha} = \alpha_i  \land f \in \alpha_i$. By Proposition~\ref{proj_inh_fault} we know that every fault belonging to a projection also belongs to the trace in the global system, then $f \in \alpha$ and $G_1 \times G_2$ is diagnosable,
    \item if $\alpha \in \traces{G_i^f \times G_j}$ and $\sigma \not \in \traces{G_i^f \times G_j}$ we know by Proposition~\ref{def:redprod} that $\forall \alpha_i : \proj{i}{\alpha}=\alpha_i$ and $f \not \in \alpha_i$ and also that $\exists \sigma_i : \proj{i}{\sigma}=\sigma_i$ with $f \in \sigma_i$. As $obs(\sigma)=obs(\alpha)$ we have that $obs(\sigma_i)=obs(\alpha_i)$ by Proposition~\ref{proj_obs}. Finally as $G_i$ is diagnosable and $f \in \sigma_i$, the fault should belong to $\alpha_i$, leading to a contradiction. We can conclude that $G_1 \times G_2$ is diagnosable.
  \end{enumerate}
\end{proof}

\begin{example}
  From Example~\ref{ex:prod} and Example~\ref{ex:reduced} we know that $A,B, A^f \times B$ and $A \times B^f$ are diagnosable. If we apply Theorem~\ref{the:2} we can conclude that $A \times B$ is diagnosable, which is consistent with the analysis made in Example~\ref{ex:prod}.
\end{example}

\subsection{Generalization}

Until now we only consider systems composed by only two components. However, real examples are usually more complex and are composed of several components. Therefore we need to generalize the previous results to global systems composed of $n$ different components running in parallel.

In order to generalize all our results, the associativity and commutativity property of synchronous product become essential. Note that in a general case the set of synchronizing actions is not necessarily the intersection of all their observable actions. Suppose that a system is composed by three components, $G_1, G_2$ and $G_3$, where two of them synchronize via an action $a$ that does not belong to a third component, i.e. $a \in \Sigma_o^1 \cap \Sigma_o^2$, but $a \not \in \Sigma_o^3$. We expect that $G_1$ and $G_2$ still synchronize in $a$. Fortunately, despite its apparent complications, the synchronous product is associative and commutative. The proof of such result can be found in previous work~\cite{gonza}. 

The following results generalized Theorems~\ref{the:1} and~\ref{the:2} respectively, giving necessary and sufficient conditions for the diagnosability of the global system.

\begin{theorem} \label{the:3}
  Let $G_1, G_2, \dots , G_n$ be $n$ components modeled by LTSs, then \\
  $\def\arraystretch{1.5}
  \begin{array}{c}
    \hspace{20mm} \diag{G_1 \times G_2 \times \dots \times G_n} \\
    \hspace{20mm} \Downarrow \\
    \hspace{20mm} \overbrace{\diag{G_1 \times G^f_2 \times \dots \times G^f_n}\ \land}\\
    \hspace{20mm} \diag{G_1^f \times G_2 \times \dots \times G^f_n}\ \land \\
    \hspace{25mm} \vdots \\
    \hspace{20mm} \diag{G^f_1 \times G_2^f \times \dots \times G_n} \ \ \ \ \\
  \end{array}$
\end{theorem}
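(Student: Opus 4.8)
The plan is to deduce the $n$-component statement from repeated use of Theorem~\ref{the:1}, exploiting the associativity and commutativity of the synchronous product (established in~\cite{gonza}) to bring, at each step, the component whose fault we wish to erase into the right-hand slot of a binary product.

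First I would fix one conjunct of the conclusion, say $\diag{G_1 \times G^f_2 \times \dots \times G^f_n}$; every other conjunct is obtained in exactly the same way after a commutative reordering that places the component kept faulty in the first position. Starting from $\diag{G_1 \times G_2 \times \dots \times G_n}$, I proceed by a finite induction that removes the fault from $G_n, G_{n-1}, \dots, G_2$ in turn. For the base step, associativity lets us view the global product as $(G_1 \times \dots \times G_{n-1}) \times G_n$; since a synchronous product of LTSs is again an LTS (Definition~\ref{def_synchronization}), Theorem~\ref{the:1} applies with first component $G_1 \times \dots \times G_{n-1}$ and second component $G_n$, and its second conjunct yields $\diag{(G_1 \times \dots \times G_{n-1}) \times G^f_n}$. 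For the induction step, assume $\diag{G_1 \times \dots \times G_k \times G^f_{k+1} \times \dots \times G^f_n}$ has been obtained for some $2 \le k \le n-1$; rewriting it by commutativity as $\diag{(G_1 \times \dots \times G_{k-1} \times G^f_{k+1} \times \dots \times G^f_n) \times G_k}$ and invoking Theorem~\ref{the:1} once more, its second conjunct gives $\diag{G_1 \times \dots \times G_{k-1} \times G^f_k \times G^f_{k+1} \times \dots \times G^f_n}$. Stopping at $k = 2$ produces exactly $\diag{G_1 \times G^f_2 \times \dots \times G^f_n}$, and ranging over which component is kept faulty yields all the conjuncts of the theorem.

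Two routine points must be checked along the way: that each intermediate object fed to Theorem~\ref{the:1} is a genuine LTS carrying the same observable/unobservable splitting and the same fault $f$ as the original components --- immediate from Definition~\ref{def_synchronization} together with the fact that Algorithm~\ref{algo} leaves $\Sigma$ untouched --- and that the reorderings used are legitimate, which is precisely the associativity/commutativity result of~\cite{gonza}. I expect the main (and only mild) difficulty to be the bookkeeping of the induction: one has to make sure that at each stage the fault has been erased from exactly the intended set of components and from no other, so that the second conjunct of Theorem~\ref{the:1} really delivers the next term of the list. With that care taken the argument is a straightforward iteration of the binary case, and the analogous generalization of Theorem~\ref{the:2} would follow the same pattern combined with the three-case analysis used in its proof.
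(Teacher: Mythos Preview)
Your proposal is correct and follows essentially the same route the paper indicates: the paper does not spell out a proof of Theorem~\ref{the:3} but simply remarks that it ``can be inferred directly from results that can be found in~\cite{gonza}'', namely associativity and commutativity of the synchronous product, together with the binary case (Theorem~\ref{the:1}). Your inductive peeling-off of one faulty component at a time, each step being a single application of the second conjunct of Theorem~\ref{the:1} after a commutative reordering, is exactly the argument the paper has in mind, only made explicit.
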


\begin{theorem} \label{the:4}
  Let $G_1, G_2, \dots , G_n$ be $n$ components modeled by LTSs, then \\
  $\def\arraystretch{1.5}
  \begin{array}{c}
    \hspace{10mm} \diag{G_1} \land \diag{G_1 \times G^f_2 \times \dots \times G^f_n}\ \land \\
    \hspace{10mm} \diag{G_2} \land \diag{G_1^f \times G_2 \times \dots \times G^f_n}\ \land \\
    \hspace{15mm} \vdots \\
    \hspace{10mm} \underbrace{\diag{G_n} \land \diag{G^f_1 \times G_2^f \times \dots \times G_n} \ \ \ \ } \\
    \hspace{10mm} \Downarrow \\
    \hspace{10mm} \diag{G_1 \times G_2 \times \dots \times G_n} \\
  \end{array}$
\end{theorem}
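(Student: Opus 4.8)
The plan is to reduce the $n$-component case to the two-component case (Theorem~\ref{the:2}) by exploiting the associativity and commutativity of the synchronous product, which was established in~\cite{gonza} and is recalled just above the statement. The key observation is that each conjunct in the hypothesis of Theorem~\ref{the:4} has exactly the shape needed to play the role of one of the two disjuncts in the hypothesis of Theorem~\ref{the:2}, once we group the remaining components together.

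First I would fix an arbitrary fault $f \in \Sigma_F$ and an arbitrary index $i \in \{1,\dots,n\}$, and set $H_i = G_1^f \times \dots \times G_{i-1}^f \times G_{i+1}^f \times \dots \times G_n^f$ (the synchronous product of the fault-free versions of all components except $G_i$). By commutativity and associativity we may rewrite $G_1 \times \dots \times G_n$ as $G_i \times H_i'$, where $H_i' = G_1 \times \dots \times G_{i-1} \times G_{i+1} \times \dots \times G_n$, and likewise the $i$-th conjunct of the hypothesis, $\diag{G_i^f \times G_2^f \times \dots}$ with the $i$-th factor unprimed, becomes exactly $\diag{G_i \times H_i}$. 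So for each $i$ the hypothesis gives us $\diag{G_i} \land \diag{G_i \times H_i}$. The intended reading is that $H_i$ is a fault-free version of the composite system $H_i'$; if that holds, Theorem~\ref{the:2} applied to the pair $(G_i, H_i')$ immediately yields $\diag{G_i \times H_i'} = \diag{G_1 \times \dots \times G_n}$, and we are done (we only need one value of $i$, although all $n$ conjuncts are available).

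The step I expect to be the main obstacle is justifying that $H_i = G_1^f \times \dots \times G_n^f$ (minus the $i$-th factor) is genuinely an ``$f$-fault free version'' of $H_i' = G_1 \times \dots \times G_n$ (minus the $i$-th factor), in the precise sense needed by Theorem~\ref{the:2} — namely that $\traces{G_i \times H_i}$ consists exactly of those traces of $G_i \times H_i'$ all of whose projections onto $H_i'$ are $f$-free (Proposition~\ref{def:redprod}), and correspondingly that $\diag{H_i'}$ would follow from the per-component data if needed. Concretely one must show that removing $f$ from each factor and then composing yields the same traces as composing first and then removing every trace whose projection to the composite contains $f$; this is where a small lemma is required, proved by induction on trace length using the definition of $\delta^{1\times2}$, showing that $(G_1 \times G_2)^f$ and $G_1^f \times G_2^f$ have the same set of traces whenever $f$ is a (necessarily unobservable, hence non-synchronizing) action, so that by associativity the iterated construction is well-defined and agrees with Algorithm~\ref{algo} applied to the whole product. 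Once this commutation of ``remove $f$'' with ``$\times$'' is in hand, the generalization is a routine induction on $n$: the base case $n=2$ is Theorem~\ref{the:2}, and the inductive step folds $G_2^f \times \dots \times G_n^f$ into a single fault-free LTS and applies the two-component result, with the bookkeeping of which conjuncts are consumed handled by commutativity.
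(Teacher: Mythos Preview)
Your reduction to Theorem~\ref{the:2} does not go through as stated. Theorem~\ref{the:2} is not a one-sided statement: to conclude $\diag{G_i \times H_i'}$ you need \emph{both} conjuncts of its hypothesis, namely
\[
\diag{G_i}\ \land\ \diag{G_i \times (H_i')^f}\qquad\text{and}\qquad \diag{H_i'}\ \land\ \diag{G_i^f \times H_i'}.
\]
You verify the first line (granting your commutation lemma $(H_i')^f = H_i$), but the second line asks for $\diag{H_i'} = \diag{\,\prod_{j\neq i} G_j\,}$ and for $\diag{G_i^f \times \prod_{j\neq i} G_j}$. Neither of these is among the hypotheses of Theorem~\ref{the:4}: those hypotheses only provide products in which \emph{all but one} factor is fault-free, whereas $G_i^f \times H_i'$ has exactly one fault-free factor. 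So the remark ``we only need one value of $i$'' is precisely where the argument breaks.

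The induction you sketch at the end inherits the same defect. To run the inductive step you would need, say, $\diag{G_2 \times \dots \times G_n}$; by the inductive hypothesis this would follow from $\diag{G_2^f \times \dots \times G_k \times \dots \times G_n^f}$ for each $k\ge 2$, but Theorem~\ref{the:4} only hands you these products with an extra factor $G_1^f$ attached, and there is no general implication from $\diag{G_1^f \times X}$ to $\diag{X}$. Likewise $\diag{G_1^f \times G_2 \times \dots \times G_n}$ is simply unavailable.

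The paper does not spell out a proof either --- it defers to~\cite{gonza} --- but the intended route is not a black-box reduction to Theorem~\ref{the:2}. Rather, one reruns the case analysis of the proof of Theorem~\ref{the:2} directly in the $n$-component product, using $n$-ary analogues of Propositions~\ref{proj_inh_fault}, \ref{proj_obs} and~\ref{def:redprod}: given $\sigma,\alpha$ with $f\in\sigma$ and $obs(\sigma)=obs(\alpha)$, one looks at the indices $i$ with $f\in P_i(\sigma)$ and splits on whether $\alpha$ (and $\sigma$) lie in the corresponding $G_1^f\times\dots\times G_i\times\dots\times G_n^f$. Your commutation observation $(\prod_j G_j)^f=\prod_j G_j^f$ is still the right auxiliary fact, but it feeds into the generalized Proposition~\ref{def:redprod}, not into a two-component collapse.
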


Their proofs can be inferred directly from results that can be found in~\cite{gonza}. 

When the faults can occur in every component and $G_1^f \times G_2^f \times \dots \times G_n \neq G_1 \times G_2 \times \dots \times G_n$, our approach shows important advantages, however in the cases where $G_1^f \times G_2^f \times \dots \times G_n = G_1 \times G_2 \times \dots \times G_n$, the whole product is analyzed and the computation time of our method is equal to the classic one. Nevertheless, when a diagnosability analysis is performed it is because it is known that several faults can occur in different components and it is more likely that $G_1^f \times G_2^f \times \dots \times G_n$ is smaller than $G_1 \times G_2 \times \dots \times G_n$.

Moreover, the diagnosability analysis of each component and $G_1^f \times G_2^f \times \dots \times G_n$ can be tested in parallel, allowing parallel analysis of diagnosability.

\section{The DADDY tool}\label{sec:daddy}

In the previous section we try to minimize the information that components needs to share to be able to decide the diagnosability property of the whole system. We now present our tool, called DADDY (from Distributed Analysis for distributed Discrete sYstems). DADDY implements the method presented above and the classic one (where the synchronous product is computed before the diagnosability analysis is performed). The tool is written in Python and has GNU GPL v3 license. It uses a standard format (.aut) for the description of each component and it also allows to see a graphical representation of the system. It can be downloaded from~\cite{daddy}.

The tool receives as inputs the components of the system. These inputs are assumed to be diagnosable, if not, an alert message is returned. If the specifications, meaning the non faulty components, are not given, systems $G^f_j$, for $j \not = i$, are computed following Algorithm~\ref{algo}. Hence $G^f_j$ is synchronized with $G_i$, and its diagnosability is  checked using the twin plant method from~\cite{twin}. Also, time $t_i$ of such computation is registered.

As soon as it is known that a component interacting with fault free versions of the other ones is non diagnosable, applying Theorem~\ref{the:3}, a non diagnosable verdict is returned. Moreover, using the fact that it is a distributed computation, when we find a non diagnosable component, the computation of all others components can be stopped. So, the resulting time of such computation is $min(t_i)$ with $1 \leq i \leq n$.

On the other hand, if every component interacting with the fault free version of the other ones is diagnosable, using the assumption that every $G_i$ is diagnosable by its own, we can conclude that $G_1 \times \dots \times G_n$ is diagnosable applying Theorem~\ref{the:4}. In this case, the diagnosability of every component is computed (in parallel) and the required time is $max(t_i)$ with $1 \leq i \leq n$.

We can see in table from Figure~\ref{fig:table} that the diagnosability analysis results obtained by DADDY are consistent with the ones presented in our previous examples. We can also see that our method can be almost ten times faster than the classical one. If we consider systems $n_1, n_2, n_3$ from exaples/sample5 in~\cite{daddy}, a non diagnosable result is obtained (as $n_1^f \times n_2 \times n_3^f$ is not diagnosable) in 0.16974902153 seconds with our method while the classical one does not reach a result after more than 24 hours. This shows an important improvement with respect to the classical method when the number of components grows.


\section{Conclusions and Future Work}\label{sec:CFW}

We have presented a new framework for the distributed diagnosability analysis of concurrent systems. We remove the assumption that a fault can only occur in a single component (which is usually made in distributed systems) and allow to analyze more general systems. The method presented in this paper parallelized the analysis leading, in general, to an important reduction in the computing time. The theoretical results are illustrated by several examples and supported by experimental results obtained with the DADDY tool.

We plan to continue trying to keep reducing the system in order to obtain minimal components from which we can infer the diagnosability of the original global system. In addition, we intend to relax the assumption that the communicating (synchronizing) events are observable.

Furthermore, even if the framework presented in this paper allows the distribution of the analysis, the formalism to model the systems is still sequential (product of LTSs) and can suffer of state space explosion making the twin plant method to check its diagnosability still prohibitive. We are working to extend such analysis to concurrent models such as Petri Nets. \\

\begin{figure}
\begin{center}
\footnotesize
\begin{tabular}{|c|c|c|c|}
\hline
System & Diagnosable & Our method & Classic method \\
\hline
& & 0.0027251243 & 0.024051904 \\
& & 0.0028400421 & 0.023932933 \\
$A \times B$ & yes & 0.0028848648 & 0.024003028 \\
& & 0.0029160976 & 0.025793075 \\
& & 0.0032229423 & 0.023809194 \\
\hline
& & 0.0041198730 & 0.015272855 \\
& & 0.0040440559 & 0.015629053 \\
$C \times D$ & no & 0.0042178630 & 0.015436887 \\
& & 0.0040760040 & 0.009753942 \\
& & 0.0047080516 & 0.015598058 \\
\hline
\end{tabular} 
\end{center}
  \caption{Diagnosis results in seconds unit}
  \label{fig:table}
\end{figure}

\bibliographystyle{dx-2013}
\bibliography{diagnosis}
\end{document}